\documentclass[a4paper,reqno]{amsart}

\usepackage[T1]{fontenc}
\usepackage[utf8]{inputenc}
\usepackage[english]{babel}
\usepackage{amsmath,amssymb,mathtools}
\usepackage{enumitem}
\usepackage{xcolor}
\usepackage[linktocpage=true,colorlinks=true,linkcolor=blue,citecolor=red,urlcolor=teal]{hyperref}

\newcommand{\C}{\mathbb C}

\newcommand{\PP}{\mathbb P}

\newcommand{\wt}{\operatorname{wt}}
\newcommand{\Conf}{\operatorname{Conf}}
\newcommand{\Rem}{\operatorname{Rem}}
\newcommand{\inw}{\operatorname{in}_{\!w}}

\theoremstyle{plain}
\newtheorem{theorem}{Theorem}[section]
\newtheorem{proposition}[theorem]{Proposition}
\newtheorem{lemma}[theorem]{Lemma}
\newtheorem{corollary}[theorem]{Corollary}

\theoremstyle{definition}

\theoremstyle{remark}
\newtheorem{remark}[theorem]{Remark}

\numberwithin{equation}{section}

\title[A generalized Stieltjes system]{A generalized Stieltjes system with polynomial source}

\author[D. Masoero]{D. Masoero}
\address{Grupo de F\'isica Matem\'atica, Universidade de Lisboa, Edif\'icio C6, Campo Grande, Lisboa, Portugal}
\email{dmasoero@gmail.com}

\author[B. Shapiro]{B. Shapiro}
\address{Department of Mathematics, Stockholm University, SE-106 91 Stockholm, Sweden}
\email{shapiro@math.su.se}

\subjclass[2020]{34M35, 33C45, 30C15, 31C10}
\keywords{Stieltjes system, Heine--Stieltjes polynomial, Van Vleck polynomial, Hermite polynomial, weighted B\'ezout theorem, degenerate Lam\'e equation}

\begin{document}

\begin{abstract}
Let $Q$ be a monic polynomial of degree $M+1$. We study the algebraic system
\[
        \sum_{j\ne i}\frac{1}{x_i-x_j}=Q(x_i),\qquad i=1,\ldots,N,
\]
for pairwise distinct complex numbers $x_1,\ldots,x_N$, modulo permutations of
these numbers.  The case $M=0$ is, after a translation, the classical Stieltjes
system for the zeros of a Hermite polynomial.  We prove that, for arbitrary
$Q$, the number of solutions is at most $\binom{N+M}{N}$, and that the
coefficient equations for the associated monic Stieltjes polynomial have total
intersection multiplicity exactly $\binom{N+M}{N}$.  Consequently the bound is
attained for all $Q$ in a non-empty Zariski open subset of the affine space of
monic polynomials of degree $M+1$.  We also describe the solutions when the
coefficient of the linear term of $Q$ is large: the system splits into $M+1$
weakly coupled classical Stieltjes systems, one near each zero of $Q$.
\end{abstract}

\maketitle

\section{Introduction}

The classical Heine--Stieltjes theory relates equilibrium configurations of
movable logarithmic charges to polynomial solutions of second order linear
differential equations; see Heine \cite{Heine}, Stieltjes \cite{Stieltjes},
and the modern survey \cite{ShapiroHS}.  The Hermite equation is a basic
degenerate example of the same general philosophy, and its electrostatic
interpretation is discussed, for instance, in Szeg\H{o}'s book
\cite[Sec.~6.7]{Szego} and in \cite{DimitrovShapiro}.  The present note treats
a simple polynomial-source version of this degenerate situation.

Fix integers $N\geq1$ and $M\geq0$.  For a monic polynomial
\begin{equation}\label{eq:source}
        Q(z)=z^{M+1}+a_Mz^M+\cdots+a_1z+a_0
\end{equation}
of degree $M+1$, consider
\begin{equation}\label{eq:genStieltjes}
        \sum_{j\ne i}\frac{1}{x_i-x_j}=Q(x_i),\qquad i=1,\ldots,N,
\end{equation}
with $(x_1,\ldots,x_N)\in\Conf_N(\C)$, where
\[
        \Conf_N(\C):=\{(x_1,\ldots,x_N):x_i\ne x_j\text{ for }i\ne j\}.
\]
Two ordered solutions are considered equivalent if they differ by a permutation
of the $x_i$'s.  Equivalently, an equivalence class is represented by the monic
polynomial
\[
        P(z)=\prod_{i=1}^N(z-x_i).
\]
We shall therefore freely pass between unordered solutions of
\eqref{eq:genStieltjes} and monic polynomials $P$ of degree $N$.

\begin{theorem}\label{thm:main}
For every monic polynomial $Q$ of degree $M+1$, the number of inequivalent
solutions of \eqref{eq:genStieltjes} is at most
\[
        \binom{N+M}{N}.
\]
More precisely, the associated coefficient equations for the monic polynomial
$P$ have total intersection multiplicity exactly $\binom{N+M}{N}$.  Hence the
same binomial coefficient is the exact number of inequivalent solutions for all
$Q$ in a non-empty Zariski open subset of the affine space of monic polynomials
of degree $M+1$.
\end{theorem}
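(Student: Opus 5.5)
The plan is to translate \eqref{eq:genStieltjes} into a differential equation for $P$ and then count via a weighted B\'ezout argument. If $P(z)=\prod(z-x_i)$ has distinct roots, then $P''(x_i)/(2P'(x_i))=\sum_{j\ne i}1/(x_i-x_j)$. Hence the system says exactly that $P''-2QP'$ vanishes at every root of $P$, i.e. $P$ divides $P''-2QP'$. Since $\deg(P''-2QP')=N+M$, this is equivalent to the existence of a polynomial $V$ of degree $M$ (a Van Vleck polynomial) with
\[
P''-2QP'+VP=0 .
\]
Comparing leading coefficients forces $V$ to have leading coefficient $2N$.

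Conversely, every monic $P$ of degree $N$ satisfying this equation has simple roots. If $P$ had a root of multiplicity $k\ge 2$ at $x$, the equation near $x$ is of the form $P''=2QP'-VP$, and $P''$ has order exactly $k-2$ while the right-hand side has order $\ge k-1$; this is a contradiction. So unordered solutions correspond bijectively to monic $P$ for which $P\mid P''-2QP'$. Equivalently, they are the common zeros of the $N$ coefficients of the remainder $\Rem(P''-2QP',P)$. These are $N$ polynomial equations in the $N$ unknown coefficients $p_1,\dots,p_N$ of $P=z^N+p_1z^{N-1}+\dots+p_N$, which are the coefficient equations of the theorem.

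To count, I assign weights $\wt(p_k)=k$, $\wt(a_m)=M+1-m$ and $\wt(z)=1$, so that $P$ and $Q$ are weighted homogeneous. The $k$-th remainder coefficient (of $z^{N-k}$) is then weighted homogeneous of weight $M+1+k$ for $k=1,\dots,N$.

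Let $\inw$ denote the top-weight part. Setting all lower coefficients $a_0,\dots,a_M$ of $Q$ to zero gives the $\inw$ system, namely $Q=z^{M+1}$. If that system has only the trivial solution $p=0$ in $\C^N$, then the weighted B\'ezout theorem applies: the system is a regular sequence, there are no solutions at infinity in weighted projective space, and the total intersection multiplicity for any $Q$ equals
\[
\prod_{k=1}^N\frac{M+1+k}{k}=\binom{N+M+1}{N}?
\]
This does not match the claimed answer. I must recheck the weights: the $z^{N-k}$ coefficient of $2QP' \bmod P$ has weight $M+k$ (since $P'$ has weight $N-1$ in total), which gives $\prod_k (M+k)/k=\binom{N+M}{N}$. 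That is the target, and the careful weight bookkeeping is one of the routine points I will verify.

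The main obstacle is proving that the only monic-degree-$N$ solution with $Q=z^{M+1}$ in the homogenized sense is $P=z^N$. Because everything is weighted homogeneous with $Q=z^{M+1}$, any nontrivial solution $P$ would generate a $\C^*$-orbit of solutions $\lambda^{-N}P(\lambda z)$. The condition at infinity concerns nonzero $p$ where the top-weight equations vanish, which is exactly a nonzero solution of the homogeneous system. For such a $P$, with $V$ of the form $2Nz^M+\dots$, I use the equation $P''-2z^{M+1}P'+VP=0$ with $P\ne z^N$. The conclusion should follow from the fact that $P$ would have simple roots $x_i$ satisfying $\sum_{j\ne i}1/(x_i-x_j)=x_i^{M+1}$, and homogeneity forces the scaling of the roots.

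A cleaner route is a moment identity. Multiply the $i$-th equation by $x_i^{\ell}$ and sum over $i$; the left side becomes $\tfrac12\sum_{i\ne j}(x_i^\ell-x_j^\ell)/(x_i-x_j)$, a polynomial in the power sums $s_r=\sum_i x_i^r$ of degree $\ell-1$. The right side is $s_{\ell+M+1}$. Taking $\ell=0,1,\dots$ gives $s_{M+1}=0$, $s_{M+2}=\binom N2$, and so on. For the homogeneous degeneration, scale invariance means the weight-zero constants must vanish. The actual degenerate system therefore reads $s_{M+k}=0$ for $k=1,\dots,N$ (and beyond), together with the relations that force $s_1=\dots=s_M=0$ as well. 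The $N$ consecutive vanishing power sums $s_{M+1},\dots,s_{M+N}$, combined with the Newton identities and the structure of the equations, should force all $x_i=0$.

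I expect this to be the genuinely delicate step. My first attempt is to argue directly: a nonzero solution of the top-weight system gives a nonzero $P$ with $P\mid 2z^{M+1}P'$ up to lower-order terms. This yields $P\mid z^{M+1}P'$, so every root of $P$ is either $0$ or a multiple root, which in a homogeneous limit forces $P=z^N$.

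Finally, since the multiplicity total is $\binom{N+M}{N}$, the number of distinct solutions is at most that. For generic $Q$ all intersection points are simple by a discriminant argument, for which it suffices to exhibit one $Q$ with $\binom{N+M}{N}$ distinct simple solutions. The natural choice is the well-separated-zeros regime mentioned in the abstract, where the classical Hermite case gives nondegenerate solutions. This makes the set of good $Q$ a non-empty Zariski open subset.
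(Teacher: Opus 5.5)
Your overall architecture matches the paper's: divisibility $P\mid P''-2QP'$, remainder coefficients, weights $\wt(p_k)=k$ with degrees $M+k$, weighted B\'ezout, then one good $Q$ plus openness. However, the step you single out as delicate, the absence of zeros at infinity, is never proved, and most of your discussion of it targets the wrong system. The remainder coefficients are not weighted homogeneous. Their top-weight parts come only from $-2z^{M+1}P'$, so besides setting $a_0,\dots,a_M$ to zero you must also discard $P''$, which has weight $N-2$.

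The equation $P''-2z^{M+1}P'+VP=0$ that you then study is just the original problem with $Q=z^{M+1}$. It is not invariant under $P(z)\mapsto\lambda^{-N}P(\lambda z)$. For $N\ge2$, $z^N$ is not even a solution, yet the equation has genuine solutions (for $N=2$: $x_2=\zeta x_1$, $\zeta^{M+1}=-1$, $(1-\zeta)x_1^{M+2}=1$). So ``only the trivial solution'' is false for it.

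Your moment identities are derived from the genuine equations, which assume distinct roots, and the final implication is left as ``should''. Your last attempt does reach the correct condition $P\mid z^{M+1}P'$. But the conclusion ``every root is $0$ or a multiple root'' is too weak, and the appeal to a ``homogeneous limit'' proves nothing. The missing line is the order count you already used for simplicity of roots. If $\alpha\ne0$ is a root of $P$ of multiplicity $m\ge1$, then $z^{M+1}P'$ vanishes to order exactly $m-1$ at $\alpha$, contradicting divisibility; hence $P=z^N$, i.e.\ $p=0$. (Your power-sum idea can be rescued, but only on the top-weight system. There $P\mid z^{M+1}P'$ is equivalent to $s_{M+1}=\dots=s_{M+N}=0$ for the roots counted with multiplicity, and a Vandermonde argument then rules out all non-zero roots.)

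The second half of the theorem is only gestured at. Since reducedness of fibres is an open condition, it suffices to exhibit one $Q$ with $\binom{N+M}{N}$ reduced zeros, but you neither fix such a $Q$ nor prove the count. The paper takes $a_1=-\nu^{2M}$ with $\nu\to\infty$, so $Q_\nu$ has zeros $\alpha_k\approx\omega^k\nu^2$ and $\gamma\approx a_0\nu^{-2M}$. It rescales $x=\rho+Q_\nu'(\rho)^{-1/2}t$ near each zero $\rho$ and checks that the cross-cluster terms are $O(\nu^{-M-2})$. In the limit this gives a product of standard systems $\sum_{j\ne i}1/(t_i-t_j)=t_i$.

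The essential input, which your phrase ``the classical Hermite case gives nondegenerate solutions'' presupposes, is that the Jacobian of this standard system at the Hermite zeros is invertible. The paper proves its eigenvalues are $1,\dots,n$ by perturbing to $\mathsf H_n+\varepsilon\mathsf H_m$. The implicit function theorem then gives one reduced solution for each weak composition of $N$ into $M+1$ parts, which is exactly where $\binom{N+M}{N}$ comes from. Only at that point does the total multiplicity rule out further solutions. The case $M=0$, with its single translated Hermite configuration, must be treated separately.
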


The proof has two parts.  First, we rewrite \eqref{eq:genStieltjes} as a
Heine--Stieltjes divisibility condition and compute the length of the resulting
coefficient scheme by a weighted B\'ezout argument.  Second, we exhibit an
explicit non-empty open set by analysing the limit in which the coefficient
$a_1$ of the linear term tends to infinity.  In this limit the zeros of $Q$
consist of $M$ large zeros and one small zero, and each solution of
\eqref{eq:genStieltjes} is obtained by distributing the $N$ movable zeros among
these $M+1$ centres.  After the natural rescaling near each centre, one obtains
the standard Hermite/Stieltjes system.

\section{From roots to a polynomial equation}

Let $P(z)=\prod_{i=1}^{N}(z-x_i)$ have simple roots.  The elementary identity
\begin{equation}\label{eq:logder-id}
        \frac{P''(x_i)}{2P'(x_i)}=\sum_{j\ne i}\frac{1}{x_i-x_j}
\end{equation}
shows that \eqref{eq:genStieltjes} is equivalent to
\[
        P''(x_i)-2Q(x_i)P'(x_i)=0,
        \qquad i=1,\ldots,N.
\]
Thus $P$ divides $P''-2QP'$.  This gives the following standard form of the
Heine--Stieltjes correspondence.

\begin{lemma}\label{lem:divisibility}
Let $Q$ be as in \eqref{eq:source}.  Equivalence classes of solutions of
\eqref{eq:genStieltjes} are in bijection with monic polynomials $P$ of degree
$N$ for which there exists a polynomial $V$ of degree at most $M$ such that
\begin{equation}\label{eq:HS}
        P''(z)-2Q(z)P'(z)-V(z)P(z)=0.
\end{equation}
For such a polynomial $P$, all roots of $P$ are simple.
\end{lemma}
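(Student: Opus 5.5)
The proof runs in two directions, with the simplicity of roots as a separate step that supports both.

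\emph{From solutions to polynomials.} Let $(x_1,\ldots,x_N)\in\Conf_N(\C)$ solve \eqref{eq:genStieltjes}, and set $P=\prod(z-x_i)$. Since $P'(x_i)\neq0$, identity \eqref{eq:logder-id} shows that $P''(x_i)-2Q(x_i)P'(x_i)=0$ for every $i$. The roots of $P$ are simple, so $P$ divides $R:=P''-2QP'$. In $R$ the term $-2QP'$ has degree $N+M$ and leading coefficient $-2N$, while $P''$ has degree at most $N-2$. Hence $R$ has degree exactly $N+M$, and the quotient $V:=R/P$ is a polynomial of degree $M$, in particular of degree at most $M$. This gives \eqref{eq:HS}. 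The polynomial $P$ depends only on the unordered set $\{x_i\}$, so the map from equivalence classes to such $P$ is well defined. It is injective because $P$ determines its root set.

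\emph{From polynomials to solutions.} Let $P$ be monic of degree $N$ and satisfy \eqref{eq:HS} for some $V$ with $\deg V\le M$. If every root of $P$ is simple, we evaluate \eqref{eq:HS} at a root $x_i$ to get $P''(x_i)=2Q(x_i)P'(x_i)$. Since $P'(x_i)\neq0$, identity \eqref{eq:logder-id} recovers \eqref{eq:genStieltjes}, so the roots give a point of $\Conf_N(\C)$ solving the system. This proves surjectivity, and the bijection follows once we know that such a $P$ has only simple roots.

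\emph{Simplicity of roots.} This is the step that needs an argument. Suppose $x_0$ is a root of $P$ of multiplicity $k\ge2$. Then $P'$ vanishes at $x_0$ to order exactly $k-1$, and $P''$ to order exactly $k-2$. In \eqref{eq:HS} the term $QP'$ vanishes at $x_0$ to order at least $k-1$, and $VP$ to order at least $k$. So $P''=2QP'+VP$ vanishes to order at least $k-1$, which contradicts its order being exactly $k-2$. Written concretely: with $P=(z-x_0)^kg$ and $g(x_0)\ne0$, the coefficient of $(z-x_0)^{k-2}$ in \eqref{eq:HS} is $k(k-1)g(x_0)\neq0$, and no other term contributes to this order. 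The same computation applies to any nonzero polynomial solution of \eqref{eq:HS}. The bound $\deg V\le M$ is used only through the degree count in the first direction, which shows the quotient has the right size.
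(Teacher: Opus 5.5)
Your proposal is correct and follows the paper's proof essentially step for step. The forward direction uses the logarithmic-derivative identity plus simplicity of roots to get divisibility. The converse first excludes multiple roots via the uncancellable coefficient $k(k-1)g(x_0)$ of $(z-x_0)^{k-2}$ in \eqref{eq:HS}, and then evaluates at the roots. Your sharper count $\deg V=M$ and the explicit remarks on well-definedness and injectivity are harmless additions.
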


\begin{proof}
If $(x_1,\ldots,x_N)$ is a solution of \eqref{eq:genStieltjes}, then
\eqref{eq:logder-id} shows that $P''-2QP'$ vanishes at every root of $P$.
Since the roots are simple, $P$ divides $P''-2QP'$.  The quotient has degree at
most $M$, because $Q$ has degree $M+1$ and $P$ has degree $N$.

Conversely, suppose that \eqref{eq:HS} holds.  It remains first to observe that
$P$ cannot have a multiple root.  If
$P(z)=(z-\xi)^m\widetilde P(z)$ with $m\ge2$ and $\widetilde P(\xi)\ne0$, then
$P''$ has order $m-2$ at $\xi$ with leading coefficient
$m(m-1)\widetilde P(\xi)$, while $QP'$ has order at least $m-1$ and $VP$ has
order at least $m$.  The coefficient of $(z-\xi)^{m-2}$ in the left-hand side
of \eqref{eq:HS} is therefore non-zero, a contradiction.

Thus the roots of $P$ are simple.  Substituting any root $\xi$ of $P$ in
\eqref{eq:HS} gives $P''(\xi)-2Q(\xi)P'(\xi)=0$, and
\eqref{eq:logder-id} gives \eqref{eq:genStieltjes}.
\end{proof}

Let
\[
        P(z)=z^N+c_1z^{N-1}+\cdots+c_N.
\]
Divide $P''-2QP'$ by $P$ and denote by
\begin{equation}\label{eq:remainder}
        \Rem_Q(P)=r_1(c)z^{N-1}+r_2(c)z^{N-2}+\cdots+r_N(c)
\end{equation}
the remainder, where $c=(c_1,\ldots,c_N)$.  By Lemma~\ref{lem:divisibility},
solutions of \eqref{eq:genStieltjes} are exactly the common zeros of
\begin{equation}\label{eq:coefficient-system}
        r_1(c)=\cdots=r_N(c)=0.
\end{equation}

Assign the weights
\[
        \wt(z)=1,
        \qquad \wt(c_j)=j,\quad j=1,\ldots,N.
\]
Then $P(z)=z^N+c_1z^{N-1}+\cdots+c_N$ is weighted homogeneous of total weight
$N$.  For a polynomial $f(c)$, write $\inw(f)$ for its highest weighted
homogeneous part.

\begin{lemma}\label{lem:weighted-leading}
For $j=1,\ldots,N$, the polynomial $r_j$ has weighted degree at most $M+j$.
Its highest weighted homogeneous part is the coefficient of $z^{N-j}$ in the
remainder of $-2z^{M+1}P'(z)$ modulo $P(z)$.  These highest parts have no
common zero in $\C^N\setminus\{0\}$.
\end{lemma}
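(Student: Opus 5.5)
The plan is to use the fact that division with remainder by a monic weighted homogeneous polynomial respects the weighting. I first check the degree bound and identify the top part, then show the top parts have no common zero by a root-multiplicity argument.

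\emph{Homogeneity of division.} Work in $\C[c_1,\ldots,c_N][z]$, graded by $\wt(z)=1$ and $\wt(c_j)=j$. Let $F$ be weighted homogeneous of weight $d$. Division of $F$ by the monic polynomial $P$, which is weighted homogeneous of weight $N$, gives a quotient and remainder that are again weighted homogeneous of weights $d-N$ and $d$. To see this, note that each step of long division subtracts (coefficient)$\cdot z^k P$, and this is homogeneous of the same weight as the term being eliminated. Hence, in the remainder $\sum_j \rho_j(c) z^{N-j}$, the coefficient $\rho_j$ is weighted homogeneous of weight $d-(N-j)$. Since the remainder map is linear in $F$, it suffices to apply this to each homogeneous piece separately.

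\emph{Degree bound and top part.} Write
\[
P''-2QP' = P'' - 2\sum_{k=0}^{M+1} a_k z^k P',
\]
with $a_{M+1}=1$. Here $P''$ is homogeneous of weight $N-2$ and $z^kP'$ is homogeneous of weight $N-1+k$, treating the $a_k$ as constants of weight $0$. The maximal weight $N+M$ is attained only by $-2z^{M+1}P'$. By the previous paragraph, this piece contributes a term of weight exactly $M+j$ to $r_j$, and every other piece contributes weight strictly less than $M+j$. So $\wt(r_j)\le M+j$, and its degree-$(M+j)$ part is the coefficient of $z^{N-j}$ in $\Rem(-2z^{M+1}P')$ modulo $P$, as claimed.

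\emph{No common zero.} Suppose $c\in\C^N$ is a common zero of all the top parts. Then $P$ divides $z^{M+1}P'$. Let $\xi\neq0$ be a root of $P$ of multiplicity $m\ge1$. Then $z^{M+1}P'$ vanishes at $\xi$ to order exactly $m-1<m$, which is a contradiction. Hence every root of $P$ is $0$, so $P=z^N$, that is, $c=0$. (The root $0$ causes no trouble, since $z^N$ does divide $z^{M+1}\cdot Nz^{N-1}$.)

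The only point needing care is the bookkeeping in the first step, that remainders keep homogeneity with the shifted weights. Once that is set up, the rest is immediate.
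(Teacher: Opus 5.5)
Your proof is correct and follows essentially the same route as the paper. Both arguments use three steps: Euclidean division by the monic weighted-homogeneous $P$ respects the grading, so the remainder coefficient of $z^{N-j}$ for a weight-$(N+M)$ input has weight $M+j$; $-2z^{M+1}P'$ is the unique piece of top weight; and an order-of-vanishing count at a nonzero root shows that $P\mid z^{M+1}P'$ forces $P=z^N$. You simply spell out the long-division bookkeeping and the weight counts of the lower pieces ($P''$ and $z^kP'$ for $k\le M$) in more detail than the paper does.
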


\begin{proof}
If a weighted homogeneous polynomial $G(z,c)$ has total weight $N+d$, then the
remainder of $G$ modulo the weighted homogeneous monic polynomial $P$ has the
form
\[
        h_1(c)z^{N-1}+\cdots+h_N(c),
        \qquad \wt(h_j)=d+j.
\]
Indeed, Euclidean division by the monic polynomial $P$ is compatible with the
weight filtration.  The term $-2z^{M+1}P'$ has total weight $N+M$.  The terms
coming from $P''$ and from the lower degree part of $Q$ have strictly smaller
weight.  This proves the first two assertions.

Assume now that all highest parts vanish at some $c\in\C^N$.  Then the
remainder of $z^{M+1}P'$ modulo $P$ is zero, i.e.
\begin{equation}\label{eq:leading-divisibility}
        P\mid z^{M+1}P'.
\end{equation}
If $P$ has a non-zero root $\alpha$ of multiplicity $m\ge1$, then the
right-hand side of \eqref{eq:leading-divisibility} has multiplicity $m-1$ at
$\alpha$, which is impossible.  Therefore every root of $P$ is zero, so
$P=z^N$ and $c_1=\cdots=c_N=0$.  Hence the highest parts have no common zero
away from the origin.
\end{proof}

We shall use the following weighted form of B\'ezout's theorem.  The proof is
included to make clear that the multiplicity counted below is the affine length
of the coefficient scheme.

\begin{lemma}[weighted B\'ezout]\label{lem:weighted-bezout}
Let $f_1,\ldots,f_N\in\C[c_1,\ldots,c_N]$, with $\wt(c_j)=w_j>0$, have weighted
degrees at most $d_1,\ldots,d_N$.  Suppose that their highest weighted
homogeneous parts $g_i:=\inw(f_i)$ have no common zero in
$\C^N\setminus\{0\}$.  Then the common zero scheme of $f_1,\ldots,f_N$ in
$\C^N$ is finite and has length
\[
        \frac{d_1\cdots d_N}{w_1\cdots w_N}.
\]
\end{lemma}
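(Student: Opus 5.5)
The plan is to homogenize with an extra variable of weight one and reduce to a statement about a regular sequence in a graded ring. Introduce a new variable $t$ with $\wt(t)=1$. For each $i$, let $F_i(c,t)=t^{d_i}f_i(c_1/t^{w_1},\ldots,c_N/t^{w_N})$; this is weighted homogeneous of degree $d_i$ in $S=\C[c_1,\ldots,c_N,t]$, with $F_i(c,0)=g_i$ and $F_i(c,1)=f_i$.

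\textbf{Step 1: the $F_i$ and $t$ form a regular sequence.} The hypothesis says the weighted homogeneous ideal $(g_1,\ldots,g_N)$ of $\C[c]$ has only the origin as its zero set. Hence $\C[c]/(g)$ is finite dimensional, and by the graded Nullstellensatz it has Krull dimension $0$. Since $S/(F_1,\ldots,F_N,t)\cong\C[c]/(g_1,\ldots,g_N)$, the $N+1$ homogeneous elements $F_1,\ldots,F_N,t$ generate an ideal primary to the irrelevant ideal of the positively graded polynomial ring $S$. In a graded Cohen--Macaulay ring such as $S$, homogeneous elements of positive degree whose number equals $\dim S=N+1$ and which cut out the irrelevant ideal form a regular sequence, in any order. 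In particular $g_1,\ldots,g_N$ is a regular sequence in $\C[c]$, and $t$ is a nonzerodivisor on $A:=S/(F_1,\ldots,F_N)$.

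\textbf{Step 2: compare Hilbert series.} Because $g_1,\ldots,g_N$ is regular, the Hilbert series of $\C[c]/(g)$ is
\[
\prod_{i=1}^N\frac{1-s^{d_i}}{1-s^{w_i}}.
\]
This is a polynomial whose value at $s=1$ is $\prod d_i/\prod w_i$, so $\dim_\C \C[c]/(g)=\prod d_i/\prod w_i$. Since $t$ is a nonzerodivisor of degree one on $A$, we have $H_A(s)=H_{A/tA}(s)/(1-s)$. Therefore $\dim_\C A_k$ is nondecreasing in $k$ and stabilizes at $\prod d_i/\prod w_i$ for $k\gg0$.

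\textbf{Step 3: dehomogenize.} Setting $t=1$ gives $A/(t-1)A\cong\C[c]/(f_1,\ldots,f_N)$. Because $t$ is a nonzerodivisor, multiplication by $t$ gives injections $A_k\hookrightarrow A_{k+1}$, and these are isomorphisms for large $k$. The standard argument then identifies $A/(t-1)A$ with the direct limit $\varinjlim A_k$, which equals $A_k$ for $k\gg0$. More concretely, the map $A_k\to A/(t-1)A$ is surjective for $k$ large: every element of $\C[c]$ lifts to degree $\le k$ after multiplying by a power of $t$. It is injective for $k$ large because $t$ is regular, so an element of $A_k$ lying in $(t-1)A$ must be zero by a degree-comparison argument. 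Hence $\dim_\C\C[c]/(f)=\prod d_i/\prod w_i$. This proves both that the scheme is finite and that it has the stated length.

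\textbf{Main obstacle.} The only delicate point is the injectivity in Step 3. Suppose $a\in A_k$ satisfies $a=(t-1)b$, and write $b=\sum_j b_j$ by degree. Comparing homogeneous components gives $b_{j-1}t=b_j$ for $j\neq k$, with the top component $b_{\max}t=0$. Regularity of $t$ then forces all components with $j\ge k$ to vanish, and the lowest one forces the components below $k$ to vanish as well. Hence $a=0$. This component chase is where the regularity of $t$, obtained in Step 1, is used essentially.
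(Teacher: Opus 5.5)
Your proof is correct, but it follows a different route from the paper's main argument. It is essentially the homogenization alternative that the paper only mentions in its closing remark, made self-contained.

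The paper stays in $\C[c_1,\ldots,c_N]$. It computes the length of $\C[c]/(g_1,\ldots,g_N)$ from the Hilbert series of a regular sequence. It then transfers this to $\C[c]/(f_1,\ldots,f_N)$ by a filtered--graded principle: if an initial form is a nonzerodivisor on $\operatorname{gr}$, then $\operatorname{gr}(A/(f))\cong\operatorname{gr}(A)/(g)$. Applying this $N$ times gives $\operatorname{gr}\bigl(\C[c]/(f)\bigr)\cong\C[c]/(g)$.

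You instead encode the weight filtration in the graded ring $\C[c,t]$. A single Cohen--Macaulay/system-of-parameters argument for the $N+1$ elements $F_1,\ldots,F_N,t$ then gives two things at once: the regularity of the $g_i$, and the regularity of $t$ on $A=\C[c,t]/(F)$. The regularity of $t$ is what replaces the paper's filtered--graded step. It is equivalent to $A$ being the Rees algebra of $\C[c]/(f)$, so that $A/tA\cong\operatorname{gr}$ and $A_k$ is the $k$-th filtered piece. Your component chase and the stabilization of $\dim_\C A_k$ make this explicit.

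What each buys:
\begin{itemize}
\item The paper's version is shorter and records the finer structural statement $\operatorname{gr}(\C[c]/(f))\cong\C[c]/(g)$.
\item Yours replaces the paper's one-line ``comparing highest weighted terms'' justification and its citation of weighted projective B\'ezout with an explicit argument. It also makes the geometric meaning transparent: there are no solutions on $t=0$.
\end{itemize}

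Two minor points:
\begin{itemize}
\item The identity $F_i(c,0)=g_i$ needs $f_i$ to have weighted degree exactly $d_i$. The paper reads the lemma the same way, and the hypothesis forces this if $g_i$ is read as the degree-$d_i$ part.
\item Your ``more concretely'' surjectivity sentence is loose, since multiplying by $t$ raises degree. Surjectivity of $A_k\to A/(t-1)A$ for large $k$ comes from $t\colon A_k\to A_{k+1}$ being an isomorphism once $\dim_\C A_k$ has stabilized. Your direct-limit formulation already covers this.
\end{itemize}
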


\begin{proof}
The polynomials $g_1,\ldots,g_N$ form a weighted homogeneous system of
parameters in the polynomial ring $S=\C[c_1,\ldots,c_N]$.  Since $S$ is
Cohen--Macaulay, they form a regular sequence.  Hence
\[
        \operatorname{Hilb}_{S/(g_1,\ldots,g_N)}(t)
        =\frac{\prod_{i=1}^N(1-t^{d_i})}{\prod_{j=1}^N(1-t^{w_j})},
\]
and the length of $S/(g_1,\ldots,g_N)$ is the value at $t=1$ after cancelling
the zero of numerator and denominator, namely
$\prod_i d_i/\prod_j w_j$.

It remains to pass from the highest parts $g_i$ to $f_i$.  We use the
following elementary filtered-graded fact.  If $A$ is a filtered algebra, $f\in A$
has initial form $g\in\operatorname{gr}A$, and $g$ is a non-zero-divisor on
$\operatorname{gr}A$, then
$\operatorname{gr}(A/(f))\simeq (\operatorname{gr}A)/(g)$; this follows by
comparing highest weighted terms.  Applying this fact successively to the
regular sequence $g_1,\ldots,g_N$ shows that $f_1,\ldots,f_N$ is a regular
sequence, that $S/(f_1,\ldots,f_N)$ is finite-dimensional, and that its
associated graded algebra is $S/(g_1,\ldots,g_N)$.  Thus the two quotients have
the same length.

Equivalently, one may homogenize the equations with a new variable $\tau$ of
weight $1$ and apply B\'ezout in the weighted projective space
$\PP(w_1,\ldots,w_N,1)$; see, for example, \cite{Dolgachev}.  The hypothesis
precisely says that the homogenized system has no solutions on the hyperplane
$\tau=0$.
\end{proof}

\begin{proposition}\label{prop:length}
For every monic polynomial $Q$ of degree $M+1$, the coefficient system
\eqref{eq:coefficient-system} has finitely many zeros and total intersection
multiplicity
\[
        \binom{N+M}{N}.
\]
\end{proposition}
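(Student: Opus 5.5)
The plan is to apply Lemma~\ref{lem:weighted-bezout} directly to the system \eqref{eq:coefficient-system}, taking $f_j=r_j$, weights $w_j=\wt(c_j)=j$, and weighted degree bounds $d_j=M+j$. All hypotheses of that lemma are supplied by Lemma~\ref{lem:weighted-leading}.

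First, Lemma~\ref{lem:weighted-leading} shows that each $r_j$ has weighted degree at most $M+j$. It also identifies the highest part $\inw(r_j)$ as the coefficient of $z^{N-j}$ in the remainder of $-2z^{M+1}P'$ modulo $P$.

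A small point has to be checked here. In Lemma~\ref{lem:weighted-bezout}, the polynomial $g_j$ should be the weight-$(M+j)$ component of $r_j$, even in the degenerate case where that component might vanish identically. The last assertion of Lemma~\ref{lem:weighted-leading} rules this out. If some weight-$(M+j)$ component were zero, then the remaining $N-1$ weighted homogeneous equations would cut out a cone of dimension at least $1$ in $\C^N$. That cone would contain a nonzero common zero of all the highest parts, which contradicts Lemma~\ref{lem:weighted-leading}. Hence each $g_j$ is nonzero, of exact weight $M+j$, and the $g_j$ have no common zero in $\C^N\setminus\{0\}$.

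Lemma~\ref{lem:weighted-bezout} then says that the zero scheme of $r_1,\dots,r_N$ is finite. Its length is
\[
\frac{\prod_{j=1}^N (M+j)}{\prod_{j=1}^N j}=\frac{(N+M)!}{M!\,N!}=\binom{N+M}{N}.
\]
Finiteness of the zero set follows at once, since a finite-length scheme has finitely many points. The total intersection multiplicity is by definition this length, i.e. the sum of the local lengths at the points.

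The substantive work is therefore already contained in Lemmas~\ref{lem:weighted-leading} and~\ref{lem:weighted-bezout}. The one place needing care is matching the data exactly: the weighted degree bounds $d_j=M+j$ must be used, not merely upper bounds on actual degrees. The argument above confirms these bounds are sharp, so the regular-sequence Hilbert series computation in Lemma~\ref{lem:weighted-bezout} applies with numerator $\prod_j(1-t^{M+j})$. After that, the product telescopes to the binomial coefficient, and this gives the proposition.
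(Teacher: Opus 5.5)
Your proposal is correct and matches the paper's argument: apply Lemma~\ref{lem:weighted-bezout} to $r_1,\dots,r_N$ with $w_j=j$ and $d_j=M+j$, take the hypotheses from Lemma~\ref{lem:weighted-leading}, and evaluate $\prod_{j=1}^N (M+j)/j=\binom{N+M}{N}$. Your extra dimension-count check, that each weight-$(M+j)$ component of $r_j$ is nonzero so the degrees $d_j$ are actually attained, is valid and makes explicit a point the paper leaves implicit.
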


\begin{proof}
Apply Lemma~\ref{lem:weighted-bezout} to
$r_1,\ldots,r_N$.  By Lemma~\ref{lem:weighted-leading}, the weights are
$w_j=j$ and the degrees are $d_j=M+j$.  Therefore the total multiplicity is
\[
        \prod_{j=1}^{N}\frac{M+j}{j}
        =\binom{N+M}{N}.
\]
\end{proof}

\section{The classical Stieltjes system}

Let $\mathsf H_n$ denote the monic Hermite polynomial, normalized by
\begin{equation}\label{eq:monic-Hermite}
        \mathsf H_n(z)=2^{-n}(-1)^n e^{z^2}\frac{d^n}{dz^n}e^{-z^2}.
\end{equation}
It satisfies
\begin{equation}\label{eq:Hermite-ODE}
        \mathsf H_n''(z)-2z\mathsf H_n'(z)+2n\mathsf H_n(z)=0.
\end{equation}
Its zeros are real, simple, and symmetric with respect to the origin.  We write
them as
\[
        u_1^{(n)}<\cdots<u_n^{(n)},
        \qquad u_i^{(n)}=-u_{n+1-i}^{(n)}.
\]
For $n=0$ the set of zeros is empty.

\begin{lemma}\label{lem:classical}
The standard Stieltjes system
\begin{equation}\label{eq:standard-Stieltjes}
        \sum_{j\ne i}\frac{1}{t_i-t_j}=t_i,
        \qquad i=1,\ldots,n,
\end{equation}
has, modulo permutations, the unique solution given by the zeros of
$\mathsf H_n$.
\end{lemma}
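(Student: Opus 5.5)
The plan is to apply Lemma~\ref{lem:divisibility} with $M=0$ and $Q(z)=z$, which is monic of degree $1$. After a harmless rescaling of the right-hand side, \eqref{eq:standard-Stieltjes} is exactly \eqref{eq:genStieltjes} for this $Q$, up to the factor $2$ hidden in \eqref{eq:logder-id}. So first I will fix normalizations. By \eqref{eq:logder-id}, a configuration $t$ solves \eqref{eq:standard-Stieltjes} if and only if $P(z)=\prod_i(z-t_i)$ satisfies $P''(t_i)-2t_iP'(t_i)=0$ for all $i$. Following the proof of Lemma~\ref{lem:divisibility} verbatim, this is equivalent to the existence of a polynomial $V$ of degree at most $0$, i.e.\ a constant $V=\lambda$, with
\[
        P''-2zP'-\lambda P=0 .
\]
The same lemma also guarantees that any such monic $P$ has simple roots.

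Next I determine $\lambda$ and $P$. Comparing the coefficients of $z^n$ gives $-2n-\lambda=0$, so $\lambda=-2n$, and the equation becomes the Hermite equation \eqref{eq:Hermite-ODE}. I then show that this equation has at most one monic polynomial solution of degree $n$. Write $P=\sum_k p_kz^k$ with $p_n=1$. The coefficient of $z^k$ gives
\[
        (k+2)(k+1)p_{k+2}-2kp_k+2np_k=0,
\]
that is, $2(n-k)p_k=(k+2)(k+1)p_{k+2}$. For $k<n$ the factor $n-k$ is non-zero, so each $p_k$ is determined downward from $p_n=1$ and $p_{n+1}=0$. Hence the solution is unique, and since $\mathsf H_n$ is monic and satisfies \eqref{eq:Hermite-ODE}, we get $P=\mathsf H_n$.

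Conversely, $\mathsf H_n$ satisfies \eqref{eq:Hermite-ODE}, and by the last assertion of Lemma~\ref{lem:divisibility} it has simple roots. Its zeros therefore solve \eqref{eq:standard-Stieltjes}, which gives existence. The case $n=0$ is trivial.

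There is no real obstacle here. The only point needing care is keeping the factor-of-$2$ normalization consistent between \eqref{eq:standard-Stieltjes} and \eqref{eq:logder-id}. Alternatively, uniqueness follows from Proposition~\ref{prop:length} with $M=0$, since $\binom{n}{n}=1$ and an actual solution exists.
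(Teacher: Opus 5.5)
Your proof is correct and follows the paper's own argument: divisibility $P\mid P''-2zP'$, comparison of leading coefficients to force $V=-2n$, and uniqueness of the monic solution of \eqref{eq:Hermite-ODE} by the coefficient recursion. You also spell out the existence direction, which the paper leaves implicit. Two cosmetic slips: no rescaling is needed, since \eqref{eq:standard-Stieltjes} is literally \eqref{eq:genStieltjes} with $Q(z)=z$; and the recursion should read $2(n-k)p_k=-(k+2)(k+1)p_{k+2}$, though the sign does not affect the uniqueness argument.
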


\begin{proof}
If $R(z)=\prod_{i=1}^n(z-t_i)$ corresponds to a solution, then $R''-2zR'$ is
divisible by $R$.  Since $R$ is monic of degree $n$, comparison of leading
terms gives
\[
        R''-2zR'+2nR=0.
\]
The monic polynomial solution of \eqref{eq:Hermite-ODE} is unique by the usual
coefficient recursion; hence $R=\mathsf H_n$.
\end{proof}

\begin{lemma}\label{lem:Jacobian}
Let $J^{(n)}$ be the Jacobian matrix of the map
\[
        F_i(t)=t_i-\sum_{j\ne i}\frac{1}{t_i-t_j},
        \qquad i=1,\ldots,n,
\]
at the Hermite zero configuration $u^{(n)}=(u_1^{(n)},\ldots,u_n^{(n)})$.
Then the eigenvalues of $J^{(n)}$ are $1,2,\ldots,n$.  In particular,
$J^{(n)}$ is invertible.
\end{lemma}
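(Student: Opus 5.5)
We need the eigenvalues of the Jacobian $J^{(n)}$ of $F_i(t)=t_i-\sum_{j\ne i}(t_i-t_j)^{-1}$ at the Hermite configuration. Differentiating gives
\[
J_{ii}=1+\sum_{j\ne i}\frac{1}{(u_i-u_j)^2},\qquad J_{ij}=-\frac{1}{(u_i-u_j)^2}\quad(i\ne j).
\]
So $J^{(n)}=I+L$, where $L$ is the weighted graph Laplacian with weights $(u_i-u_j)^{-2}$. It suffices to show that $L$ has eigenvalues $0,1,\ldots,n-1$. We do this by exhibiting explicit eigenvectors coming from polynomials, not by computing $L$ directly.

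The natural plan is to linearize the polynomial formulation. Perturb the Hermite polynomial to $R_\varepsilon(z)=\prod_i(z-t_i(\varepsilon))$ with $t_i=u_i+\varepsilon v_i$. Write $S=\mathsf H_n$ and $\delta R(z)=-\sum_i v_i\,S(z)/(z-u_i)$, which is a polynomial of degree $\le n-1$. The map $v\mapsto \delta R$ is an isomorphism from $\C^n$ onto the space of polynomials of degree $<n$, by Lagrange interpolation, since $\delta R(u_i)=-v_iS'(u_i)$.

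Next we compute $\delta F_i=(Jv)_i$ in polynomial terms. Set $E(R):=R''-2zR'$ and use $F_i=-\tfrac{1}{2R'(t_i)}E(R)(t_i)$, which follows from \eqref{eq:logder-id}. At $\varepsilon=0$ we have $E(S)=-2nS$. This vanishes at the $u_i$, so only the variation of $E(R)(t_i)$ survives. That variation is
\[
E(\delta R)(u_i)+E(S)'(u_i)v_i=E(\delta R)(u_i)-2nS'(u_i)v_i=E(\delta R)(u_i)+2n\,\delta R(u_i).
\]
Hence
\[
(Jv)_i=-\frac{1}{2S'(u_i)}\bigl(E+2n\bigr)(\delta R)(u_i).
\]
Now let $T$ be the operator $-\tfrac12(E+2n)$ acting on polynomials of degree $<n$. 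It preserves degree, and on $z^k$ it acts triangularly with diagonal entry $-\tfrac12(-2k+2n)=n-k$. Its eigenvalues are therefore $n-k$ for $k=0,\ldots,n-1$, which is exactly $\{1,\ldots,n\}$.

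Translating back, $Jv$ corresponds to the polynomial whose values at $u_i$ are $-S'(u_i)(Jv)_i$, up to sign. Sign conventions aside, the identity above reads $\delta R(Jv)=T(\delta R(v))$ evaluated at the nodes. Since both sides are polynomials of degree $<n$, they agree identically. So $J$ is conjugate to $T$ via the interpolation isomorphism, and the eigenvalues of $J^{(n)}$ are $1,\ldots,n$.

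The main obstacle is the bookkeeping. One must check the sign in $\delta R(u_i)=-v_iS'(u_i)$ and the factor $\tfrac{1}{2S'(u_i)}$, so that the conjugation is exact rather than off by a diagonal rescaling. A diagonal rescaling would in fact also be harmless, since $D^{-1}TD$ has the same spectrum. Likewise the sign flip between the two normalizations of $F$ is harmless, because the formula is linear in $v$.

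Invertibility of $J^{(n)}$ then follows because $0$ is not among the eigenvalues $1,\ldots,n$.
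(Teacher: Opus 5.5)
Your strategy is essentially the paper's linearization. The paper perturbs $\mathsf H_n$ by $\varepsilon\mathsf H_m$; since $\tfrac12(E+2n)\mathsf H_m=(n-m)\mathsf H_m$, its eigenvector $-\mathsf H_m(u_i)/\mathsf H_n'(u_i)$ is exactly the preimage of $\mathsf H_m$ under your interpolation map. Your triangularity argument replaces both the explicit eigenfunctions and the paper's ``distinct eigenvalues'' step, which is fine.

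Your sign bookkeeping, however, is wrong in two places that happen to cancel.
\begin{itemize}
\item First, $-\tfrac12(-2k+2n)=k-n$, not $n-k$. So your $T=-\tfrac12(E+2n)$ has spectrum $\{-1,\dots,-n\}$.
\item Second, write $\Phi(v):=\delta R$. Your formula gives $(Jv)_i=T(\Phi v)(u_i)/S'(u_i)$. Combined with $\Phi(w)(u_i)=-S'(u_i)w_i$, this yields $\Phi(Jv)=-T(\Phi v)$, not $+T(\Phi v)$.
\end{itemize}
With both corrected, $J^{(n)}$ is conjugate to $-T=\tfrac12(E+2n)$. Its diagonal entry on $z^k$ is $n-k$, so the spectrum is $\{1,\dots,n\}$. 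Once these two lines are fixed, your argument is complete.

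Your justification that ``the sign flip \dots is harmless, because the formula is linear in $v$'' is false. Linearity does nothing for the sign, and $J$ and $-J$ have opposite spectra. The sign is exactly what separates the stated spectrum $\{1,\dots,n\}$ from $\{-1,\dots,-n\}$.

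What is harmless is changing the intertwining isomorphism, for instance replacing $\Phi$ by $-\Phi$ or by $\Phi D$ with $D$ diagonal, because that is a similarity. A sign in the relation $\Phi J=\pm T\Phi$ is not harmless. It does not matter for invertibility, which is all that is used later in the paper, but it does matter for the eigenvalue statement.
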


\begin{proof}
For $m=0,\ldots,n-1$, consider
\[
        R_\varepsilon(z)=\mathsf H_n(z)+\varepsilon\mathsf H_m(z)
\]
and let $t_i(\varepsilon)$ be the root of $R_\varepsilon$ near
$u_i^{(n)}$.  Differentiating
$R_\varepsilon(t_i(\varepsilon))=0$ at $\varepsilon=0$ gives
\[
        t_i'(0)=-\frac{\mathsf H_m(u_i^{(n)})}{\mathsf H_n'(u_i^{(n)})}.
\]
Using \eqref{eq:Hermite-ODE} for $\mathsf H_n$ and $\mathsf H_m$, one obtains,
at the roots of $R_\varepsilon$,
\[
        F_i(t(\varepsilon))
        =-(n-m)\varepsilon\,
          \frac{\mathsf H_m(t_i(\varepsilon))}{R_\varepsilon'(t_i(\varepsilon))}.
\]
Differentiating at $\varepsilon=0$ gives
\[
        J^{(n)}t'(0)=(n-m)t'(0).
\]
Thus the eigenvalues are $n,n-1,\ldots,1$, and the corresponding eigenvectors
are linearly independent because the eigenvalues are distinct.
\end{proof}

For later use, note also the translated form of the case $M=0$.  If
$Q(z)=z+a_0$, then the unique solution of \eqref{eq:genStieltjes} is
\[
        x_i=u_i^{(N)}-a_0,
        \qquad i=1,\ldots,N,
\]
up to permutation, and it is reduced by Lemma~\ref{lem:Jacobian}.

\section{Large linear coefficient}

In this section assume $M\ge1$.  Fix all coefficients of $Q$ except the linear
one, and write them as $a_0,a_2,\ldots,a_M$; if $M=1$, this means only $a_0$ is
fixed.  Put
\begin{equation}\label{eq:Qnu}
        Q_\nu(z)=z^{M+1}+\sum_{\ell=2}^{M}a_\ell z^\ell-\nu^{2M}z+a_0.
\end{equation}
Thus $a_1=-\nu^{2M}$, and the limit $|a_1|\to\infty$ is studied through
$\nu\to\infty$.

\begin{lemma}\label{lem:source-roots}
For $|\nu|$ sufficiently large, the polynomial $Q_\nu$ has $M$ simple large
zeros
\[
        \alpha_k(\nu)=\omega^k\nu^2+O(1),
        \qquad k=0,\ldots,M-1,
        \qquad \omega=e^{2\pi i/M},
\]
and one simple small zero
\[
        \gamma(\nu)=a_0\nu^{-2M}+O(\nu^{-4M}).
\]
More precisely, $\nu^{-2}\alpha_k(\nu)$ and $\nu^{2M}\gamma(\nu)$ are analytic
functions of $\nu^{-1}$ near $\nu^{-1}=0$, after the choice of the branch of
$\nu$.  Moreover
\begin{equation}\label{eq:derivatives-centres}
        Q_\nu'(\alpha_k(\nu))=M\nu^{2M}\bigl(1+O(\nu^{-2})\bigr),
        \qquad
        Q_\nu'(\gamma(\nu))=-\nu^{2M}\bigl(1+O(\nu^{-2})\bigr).
\end{equation}
\end{lemma}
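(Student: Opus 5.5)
We treat the large zeros and the small zero separately; in both cases we rescale so that the problem becomes a regular perturbation of a polynomial with simple roots, and then invoke the implicit function theorem, which is analytic in the parameter. Put $\epsilon=\nu^{-1}$.

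\emph{Large zeros.} For the large zeros substitute $z=\nu^2 w$ in \eqref{eq:Qnu} and divide by $\nu^{2M+2}$:
\[
        \nu^{-2M-2}Q_\nu(\nu^2w)=w^{M+1}-w+\sum_{\ell=2}^{M}a_\ell\,\epsilon^{2(M+1-\ell)}w^\ell+a_0\epsilon^{2M+2}=:G(w,\epsilon).
\]
This is a polynomial in $(w,\epsilon)$, and $G(w,0)=w^{M+1}-w=w(w^M-1)$ has the simple roots $0$ and $\omega^k$, $k=0,\ldots,M-1$.

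At each $\omega^k$ we have $\partial_wG(\omega^k,0)=(M+1)\omega^{kM}-1=M\neq0$. The analytic implicit function theorem therefore gives analytic roots $w_k(\epsilon)=\omega^k+O(\epsilon^2)$. Note that $G$ depends only on $\epsilon^2$, so $w_k$ is in fact a function of $\epsilon^2$. Setting $\alpha_k=\nu^2w_k$ gives $\alpha_k=\omega^k\nu^2+O(1)$, and $\nu^{-2}\alpha_k=w_k(\nu^{-1})$ is analytic, as claimed.

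For the derivative, use $Q_\nu'(\nu^2w)=\nu^{2M}\partial_wG(w,\epsilon)$. Hence
\[
        Q_\nu'(\alpha_k)=\nu^{2M}\bigl(M+O(\epsilon^2)\bigr),
\]
which is the first half of \eqref{eq:derivatives-centres}.

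\emph{Small zero.} The root of $G$ near $w=0$ corresponds to a zero of size $O(1)$ in $z$, not yet to the precise asymptotics. So for the small zero rescale differently: put $z=\nu^{-2M}y$ and divide $Q_\nu$ by $-\nu^{0}$ appropriately:
\[
        Q_\nu(\nu^{-2M}y)=a_0-y+\nu^{-2M(M+1)}y^{M+1}+\sum_{\ell=2}^M a_\ell\nu^{-2M\ell}y^\ell=:K(y,\epsilon).
\]
This is again polynomial in $(y,\epsilon)$. We have $K(a_0,0)=0$ and $\partial_yK(a_0,0)=-1$. The implicit function theorem gives an analytic root $y(\epsilon)$ with $y(0)=a_0$. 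The correction terms are of order $\epsilon^{4M}$ at worst (from $\ell=2$), so $y=a_0+O(\nu^{-4M})$. Consequently $\gamma=\nu^{-2M}y=a_0\nu^{-2M}+O(\nu^{-6M})$, which in particular is $O(\nu^{-4M})$ as stated.

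For the derivative at the small zero, compute directly:
\[
        Q_\nu'(\gamma)=-\nu^{2M}+(M+1)\gamma^M+\sum_{\ell}\ell a_\ell\gamma^{\ell-1}=-\nu^{2M}+O(1)=-\nu^{2M}\bigl(1+O(\nu^{-2})\bigr).
\]

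\emph{Counting and simplicity.} We now have $M+1$ zeros, which exhausts the degree $M+1$ of $Q_\nu$. They are pairwise distinct for large $|\nu|$ because the limits $\omega^k$ are distinct and $\gamma$ is small compared with the $\alpha_k$. Simplicity follows from the non-vanishing of the derivatives computed above. The only point requiring care is that both rescalings must be uniformly regular; since each perturbation is polynomial in $\epsilon$ with a non-degenerate leading root, the implicit function theorem suffices and there is no real obstacle.
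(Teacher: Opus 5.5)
Your proof is correct and follows the paper's argument. It uses the same two rescalings $z=\nu^2w$ and $z=\nu^{-2M}y$, which turn the problem into regular perturbations of $w^{M+1}-w$ and $a_0-y$, and then the analytic implicit function theorem; you additionally make explicit the identity $Q_\nu'(\nu^2w)=\nu^{2M}\partial_wG$, the sharper bound $\gamma=a_0\nu^{-2M}+O(\nu^{-6M})$, and the degree count showing these $M+1$ zeros are all of them, which the paper leaves implicit.
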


\begin{proof}
For the large zeros set $z=\nu^2y$.  Dividing $Q_\nu(\nu^2y)$ by
$\nu^{2M+2}$ gives
\[
        y^{M+1}-y+O(\nu^{-2})
\]
uniformly for $y$ in compact sets.  The non-zero roots of $y^{M+1}-y$ are
$1,\omega,\ldots,\omega^{M-1}$, and the derivative of $y^{M+1}-y$ at each of
these roots is $M$.  The analytic implicit function theorem gives the $M$
large roots.  For the small zero set $z=\nu^{-2M}y$; then
\[
        Q_\nu(\nu^{-2M}y)=a_0-y+O(\nu^{-4M}),
\]
which gives the stated small root.  The derivative estimates follow by
substitution in $Q_\nu'$.
\end{proof}

Choose analytic square roots and define
\begin{equation}\label{eq:local-scales}
        s_k(\nu):=Q_\nu'(\alpha_k(\nu))^{-1/2},
        \qquad
        s_O(\nu):=Q_\nu'(\gamma(\nu))^{-1/2},
\end{equation}
with asymptotics
\begin{equation}\label{eq:scale-asymptotics}
        s_k(\nu)=M^{-1/2}\nu^{-M}(1+O(\nu^{-2})),
        \qquad
        s_O(\nu)=i\nu^{-M}(1+O(\nu^{-2})).
\end{equation}
The sign choices are immaterial, since the Hermite zero configuration is
invariant under $t\mapsto -t$ up to permutation.

A weak composition of $N$ into $M+1$ parts will be written
\[
        \lambda=(\lambda_0,\ldots,\lambda_{M-1},\lambda_O),
        \qquad |\lambda|=\sum_{k=0}^{M-1}\lambda_k+\lambda_O=N.
\]
There are $\binom{N+M}{N}$ such compositions.  Given such a $\lambda$, we place
$\lambda_k$ variables near $\alpha_k$ and $\lambda_O$ variables near $\gamma$:
\begin{align}\label{eq:cluster-coordinates}
        x_{k,i}&=\alpha_k(\nu)+s_k(\nu)t_{k,i},
        && i=1,\ldots,\lambda_k,
        \quad k=0,\ldots,M-1,\\
        x_{O,i}&=\gamma(\nu)+s_O(\nu)t_{O,i},
        && i=1,\ldots,\lambda_O.\nonumber
\end{align}
Empty blocks are omitted.

\begin{proposition}\label{prop:large-linear}
Fix $a_0,a_2,\ldots,a_M$ and a weak composition $\lambda$ of $N$ into $M+1$
parts.  For $|\nu|$ sufficiently large, the Stieltjes system
\eqref{eq:genStieltjes} with source $Q_\nu$ has a unique solution, up to
permutations inside the blocks of \eqref{eq:cluster-coordinates}, of the form
\begin{equation}\label{eq:cluster-asymptotics}
        t_{k,i}=u_i^{(\lambda_k)}+O(\nu^{-1}),
        \qquad
        t_{O,i}=u_i^{(\lambda_O)}+O(\nu^{-1}),
\end{equation}
where $u_i^{(r)}$ are the zeros of the monic Hermite polynomial $\mathsf H_r$.
The corresponding solution is reduced as a zero of the coefficient system
\eqref{eq:coefficient-system}.
\end{proposition}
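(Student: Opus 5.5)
The plan is to substitute the cluster ansatz \eqref{eq:cluster-coordinates} into \eqref{eq:genStieltjes}, rescale each equation by the local scale of its block, and show that the rescaled system is a small analytic perturbation, in the parameter $\epsilon=\nu^{-1}$, of the decoupled product of standard Stieltjes systems \eqref{eq:standard-Stieltjes}. The implicit function theorem, with Lemma~\ref{lem:Jacobian} supplying invertibility, then gives existence and local uniqueness. Reducedness will follow by transferring non-degeneracy of the Jacobian in the $x$-variables to the coefficient variables $c$.

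\textbf{Step 1: the rescaled equations.} Take an index $(k,i)$ and multiply the $(k,i)$-th equation by $s_k$. Terms from the same block give exactly $\sum_{j\ne i}(t_{k,i}-t_{k,j})^{-1}$. A term from a different block has the form $s_k/(x_{k,i}-x_{l,j})$. Its denominator is $\alpha_k-\alpha_l+O(\nu^{-M})$ or $\alpha_k-\gamma+O(\nu^{-M})$, both of size $\asymp\nu^2$ by Lemma~\ref{lem:source-roots}. Hence such a term is $O(\nu^{-M-2})$, analytic in $\epsilon$ for bounded $t$. For the source term, Taylor expansion at $\alpha_k$ together with $Q_\nu(\alpha_k)=0$ and $s_k^2Q_\nu'(\alpha_k)=1$ gives
\[
        s_kQ_\nu(\alpha_k+s_kt)=t+\sum_{m=2}^{M+1}\frac{Q_\nu^{(m)}(\alpha_k)}{m!}\,s_k^{m+1}t^m .
\]
Since $Q_\nu^{(m)}(\alpha_k)=O(\nu^{2(M+1-m)})$ and $s_k^{m+1}=O(\nu^{-M(m+1)})$, the $m$-th term is $O(\nu^{(M+2)(1-m)})$, which is $O(\nu^{-M-2})$. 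The block at $\gamma$ is handled the same way. There $Q_\nu^{(m)}(\gamma)=O(1)$ for $m\ge2$ and $s_O=O(\nu^{-M})$, so the corrections are even smaller. By the analyticity statements in Lemma~\ref{lem:source-roots} and \eqref{eq:scale-asymptotics}, all coefficients are analytic in $\epsilon$ near $0$ (after fixing branches). The rescaled system therefore reads $\Phi(t,\epsilon)=0$, where $\Phi(\cdot,0)$ is the product over the blocks of the maps $-F$ from Lemma~\ref{lem:Jacobian}.

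\textbf{Step 2: the implicit function theorem.} By Lemma~\ref{lem:classical}, $\Phi(\cdot,0)$ vanishes at $t=(u^{(\lambda_0)},\ldots,u^{(\lambda_O)})$. Its Jacobian there is block diagonal with blocks $-J^{(\lambda_k)}$, which are invertible by Lemma~\ref{lem:Jacobian}. The analytic implicit function theorem then gives a unique solution $t(\epsilon)$ in a fixed neighbourhood of $u$, analytic in $\epsilon$, so $t(\epsilon)=u+O(\nu^{-1})$; this proves \eqref{eq:cluster-asymptotics}. Local uniqueness is uniqueness up to permutations inside the blocks, since the only freedom in matching a solution to the ansatz is relabelling within each block. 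The resulting $x$ are pairwise distinct: within a block because the Hermite zeros are simple, and across blocks because the centres are far apart. Moreover the Jacobian of the unscaled system in $x$ equals $\mathrm{diag}(s^{-1})\,\partial_t\Phi\,\mathrm{diag}(s^{-1})$, which is invertible for large $|\nu|$.

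\textbf{Step 3: reducedness.} Let $E_i(x)=\sum_{j\ne i}(x_i-x_j)^{-1}-Q(x_i)$. By \eqref{eq:logder-id}, at the roots $x_i$ of $P$ we have $\Rem_Q(P)(x_i)=P''(x_i)-2QP'(x_i)=2P'(x_i)E_i(x)$. The map sending a polynomial of degree $<N$ to its values at the $N$ distinct points $x_i$ is a linear isomorphism (Lagrange interpolation), and $P'(x_i)\neq0$. Near a configuration with simple roots, the map $x\mapsto c$ is a local biholomorphism. Hence, locally, the ideal generated by $r_1,\ldots,r_N$ corresponds to the ideal generated by $E_1,\ldots,E_N$ up to invertible factors. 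Invertibility of the $x$-Jacobian then means the zero of \eqref{eq:coefficient-system} is simple, i.e.\ reduced.

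The main obstacle I expect is Step 1: making the error estimates uniform and genuinely analytic in $\epsilon$, in particular handling the choice of branch for $\nu$ and the square roots in \eqref{eq:local-scales}. I would address this by working throughout with $\epsilon=\nu^{-1}$ as a local parameter and using the analytic parametrizations of $\nu^{-2}\alpha_k$ and $\nu^{2M}\gamma$ given by Lemma~\ref{lem:source-roots}.
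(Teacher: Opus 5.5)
Your proposal is correct and follows the paper's proof essentially step by step: you rescale each block by $s_b=Q_\nu'(\rho_b)^{-1/2}$, pass to the decoupled limit given by a product of classical Stieltjes systems, get invertibility of the block-diagonal Jacobian from Lemma~\ref{lem:Jacobian}, apply the analytic implicit function theorem in $\nu^{-1}$, and transfer non-degeneracy to the coefficient variables through the local biholomorphism $x\mapsto c$ away from the big diagonal. Your Step~3 is more explicit than the paper's one-line justification of reducedness: the identity $\Rem_Q(P)(x_i)=2P'(x_i)E_i(x)$, together with invertibility of the Vandermonde matrix, shows directly that the ideals $(r_1,\ldots,r_N)$ and $(E_1,\ldots,E_N)$ agree locally.
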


\begin{proof}
We prove the assertion for ordered variables inside each non-empty block; the
quotient by the block permutation group is then immediate.

Consider a block centred at a zero
$\rho_b\in\{\alpha_0,\ldots,\alpha_{M-1},\gamma\}$, with scale
$s_b=Q_\nu'(\rho_b)^{-1/2}$.  In that block write
$x_{b,i}=\rho_b+s_bt_{b,i}$.  Since $Q_\nu(\rho_b)=0$ and
$s_b^2Q_\nu'(\rho_b)=1$, Taylor's formula gives, uniformly for $t$ in compact
sets,
\begin{equation}\label{eq:source-local-expansion}
        s_b Q_\nu(\rho_b+s_bt)=t+O(\nu^{-1}).
\end{equation}
In fact the error is smaller than $O(\nu^{-1})$, but this bound is sufficient.
For two variables in the same block,
\[
        \frac{s_b}{x_{b,i}-x_{b,j}}
        =\frac{1}{t_{b,i}-t_{b,j}}.
\]
For variables in distinct blocks the centres differ by order $\nu^2$, while
$s_b=O(\nu^{-M})$; hence every cross-interaction, after multiplication by
$s_b$, is $O(\nu^{-M-2})=O(\nu^{-1})$.

Multiplying the Stieltjes equation for $x_{b,i}$ by $s_b$ therefore gives a
system analytic in the local variables and in $\nu^{-1}$ near $\nu^{-1}=0$.
Its limiting system is the product, over all blocks, of
\begin{equation}\label{eq:block-limit}
        t_{b,i}-\sum_{j\ne i}\frac{1}{t_{b,i}-t_{b,j}}=0.
\end{equation}
By Lemma~\ref{lem:classical}, the limiting solution in the $b$-th block is the
Hermite zero set of size equal to that block.  By Lemma~\ref{lem:Jacobian}, the
Jacobian of the limiting product system is block diagonal with invertible
blocks.  The analytic implicit function theorem therefore gives a unique
analytic branch with the asymptotics \eqref{eq:cluster-asymptotics}.

The same Jacobian argument shows that the obtained zero is reduced.  Passing
from ordered block variables to the coefficient system does not introduce an
extra multiplicity, because the roots in the limiting Hermite configurations
are distinct and the passage from unordered roots to coefficients is locally
biholomorphic away from the big diagonal.
\end{proof}

\begin{corollary}\label{cor:large-a1-exact}
For every fixed $a_0,a_2,\ldots,a_M$ and all sufficiently large $|a_1|$, with
$a_1=-\nu^{2M}$, the system \eqref{eq:genStieltjes} has exactly
$\binom{N+M}{N}$ inequivalent solutions.  All of them are reduced.
\end{corollary}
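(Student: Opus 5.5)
The plan is to combine the exact length count of Proposition~\ref{prop:length} with the lower bound coming from Proposition~\ref{prop:large-linear}. By Proposition~\ref{prop:large-linear}, for $|\nu|$ large each of the $\binom{N+M}{N}$ weak compositions $\lambda$ produces a reduced solution. These are zeros of the coefficient system \eqref{eq:coefficient-system}, each of multiplicity one. Proposition~\ref{prop:length} says the total intersection multiplicity of that system is exactly $\binom{N+M}{N}$ for every $Q$, in particular for $Q_\nu$. So as soon as the solutions attached to distinct compositions are known to be pairwise distinct (inequivalent), they use up all the multiplicity. It then follows that there are no further solutions and that every solution is reduced.

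The main step is therefore to show that distinct compositions give inequivalent solutions, i.e.\ distinct monic polynomials $P$. For this I will look at how many roots $x_i$ lie near each centre. By \eqref{eq:cluster-coordinates} and \eqref{eq:scale-asymptotics}, a root in block $b$ satisfies $|x-\rho_b|=O(\nu^{-M})$. By Lemma~\ref{lem:source-roots}, the centres $\alpha_0,\dots,\alpha_{M-1},\gamma$ are pairwise separated by distances of order $\nu^2$, since $\omega^k\ne\omega^{k'}$ for $k\ne k'$ and $\gamma=O(\nu^{-2M})$. Hence for $|\nu|$ large the discs $D_b$ of radius $\nu$ around the centres are disjoint, and every root of the solution attached to $\lambda$ lies in exactly one of them. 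The number of roots in $D_b$ is then exactly $\lambda_b$, so this count recovers $\lambda$ from the unordered root set. Consequently two different compositions give different $P$.

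One technical point needs care. In Proposition~\ref{prop:large-linear}, ``$|\nu|$ sufficiently large'' depends on $\lambda$. Since there are only finitely many compositions, I will take the maximum of these thresholds, together with the threshold needed for the disc separation above. The hypothesis ``$|a_1|$ large with $a_1=-\nu^{2M}$'' is handled by a fixed branch choice, and any large $a_1$ is attained by some large $\nu$. The final count is then
\[
\binom{N+M}{N}\le\#\{\text{solutions}\}\le\sum_{P}\operatorname{mult}_P=\binom{N+M}{N},
\]
so every inequality is an equality and all multiplicities equal $1$. I expect no serious obstacle beyond making the distinctness and the uniform threshold explicit. The only subtle point is that the multiplicity in Proposition~\ref{prop:length} is the affine length of the scheme. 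That length is the sum of local lengths at the finitely many zeros, each at least $1$, and this is what justifies the inequality on the right.
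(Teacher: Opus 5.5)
Your proposal is correct and follows the paper's own proof. Proposition~\ref{prop:large-linear} gives one reduced solution per weak composition, these solutions are distinguished by their cluster counts around the well-separated zeros of $Q_\nu$, and the total length $\binom{N+M}{N}$ from Proposition~\ref{prop:length} rules out any others; your disc-separation argument and uniform threshold over the finitely many compositions just make the paper's one-line distinctness claim explicit.
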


\begin{proof}
Proposition~\ref{prop:large-linear} constructs one reduced solution for each
weak composition of $N$ into $M+1$ parts.  These solutions are distinct because
they have different cluster multiplicities around the zeros of $Q_\nu$.  There
are $\binom{N+M}{N}$ such compositions.  Proposition~\ref{prop:length} says
that the total multiplicity of the whole coefficient scheme is exactly the same
number, so no further solutions are possible.
\end{proof}

\begin{proof}[Proof of Theorem~\ref{thm:main}]
The upper bound and the total multiplicity statement are precisely
Proposition~\ref{prop:length}, together with Lemma~\ref{lem:divisibility}.
For $M=0$ the translated Hermite solution described above is reduced.  For
$M\ge1$, Corollary~\ref{cor:large-a1-exact} gives at least one monic source
$Q$ for which all zeros of the coefficient system are reduced and their number
is $\binom{N+M}{N}$.  Since reducedness of the fibres of a finite algebraic
family is a Zariski open condition, equality holds on a non-empty Zariski open
subset of the affine coefficient space of monic sources $Q$.
\end{proof}

\begin{remark}
The proof of Proposition~\ref{prop:large-linear} explains why the binomial
coefficient appears geometrically: in the large-$|a_1|$ regime, the $N$ roots
of the Stieltjes polynomial choose one of the $M$ large zeros of $Q$ or the one
small zero of $Q$, and inside each chosen cluster the only possible limiting
configuration is the Hermite configuration of the corresponding size.
\end{remark}

\section*{Acknowledgements}
D. Masoero is partially supported by FCT Grant ``The Nonlinear Stokes
Phenomenon. A unifying perspective on Integrable Models, Enumerative Geometry,
and Special Functions'', 2021.00091.CEECIND.

\end{document}